\documentclass[review]{elsarticle}

\usepackage{amsfonts}
\usepackage{graphicx}
\usepackage{algorithmicx,algorithm,algpseudocode}
\usepackage{algorithm}
\usepackage{algpseudocode}
\usepackage{subcaption}
\usepackage{placeins}
\usepackage{multirow}
\usepackage{amsmath}
\usepackage{amsthm,amssymb}
\usepackage{mathtools}
\usepackage{lineno,hyperref}
\modulolinenumbers[5]
\usepackage{cleveref}

\DeclareMathOperator{\polylog}{polylog}
\DeclareMathOperator{\MPC}{MPC_n}

\newtheorem{lemma}{Lemma}
\newtheorem{theorem}{Theorem}
\begin{document}

\begin{frontmatter}

\title{An Efficient Construction of Yao-Graph in Data-Distributed Settings}

\author[sharif]{Sepideh Aghamolaei\corref{corresponding}}
\address[sharif]{Department of Computer Engineering, Sharif University of Technology, Tehran, Iran}
\ead{sepideh.aghamolaei14@sharif.edu}

\author[sharif,ipm]{Mohammad Ghodsi}
\ead{ghodsi@sharif.edu}
\address[ipm]{School of Computer Science, Institute for Research in Fundamental Sciences (IPM), Iran}
%
\cortext[corresponding]{Corresponding author}
%

\begin{abstract}
A sparse graph that preserves an approximation of the shortest paths between all pairs of points in a plane is called a geometric spanner.
Using range trees of sublinear size, we design an algorithm in massively parallel computation (MPC) model for constructing a geometric spanner known as Yao-graph. This improves the total time and the total memory of existing algorithms for geometric spanners from subquadratic to near-linear.
\end{abstract}

\begin{keyword}
geometric spanners \sep range searching \sep massively parallel algorithms
\end{keyword}

\end{frontmatter}

\section{Introduction}
Consider a graph with vertices that are a set of points in the Euclidean plane and edges that are all the pairs of vertices weighted by their Euclidean distance.
Keeping a near-linear size subset of the edges of a geometric graph such that the distances between all pairs of vertices are approximately preserved results in a subgraph called a geometric spanner. In this sparsification method for all-pairs shortest paths, the ratio of the shortest path in the geometric spanner to the shortest path in the original graph is called the stretch factor.

Formally, a $\epsilon$-spanner of $G$ is a subgraph $H$ of $G$ such that the shortest path between any pair of vertices (nodes) in $H$ approximates the shortest path between any pair of vertices in $G$ by $1+\epsilon$, for a given $\epsilon>0$. A geometric spanner uses the complete Euclidean graph (the graph that connects all pairs of vertices using edges weighted by the distances between those vertices) as $G$.

A theoretical model for cloud processing frameworks such as MapReduce and Spark is the massively parallel computation model ($\mathbf{MPC}$)~\cite{mpcj}  that restricts the memory of each machine to $m=\Theta(n^{\eta})$ and the number of machines to $L=O(n/m)$, for a constant $\eta\in(0,1)$ and an input of size $n$.
The complexity measures are the number of rounds $R(n)$ and the amount of communication $C(n)$, and the condition $R(n)=O(\polylog(n))$ and $C(n)=O(nR(n))$ must also hold.
We denote this by $\MPC(m)$. With a slight abuse of notation, we use $\mathbf{MPC}$ as the class of the problems with algorithms in this model.
A closely related model is MapReduce class (MRC)~\cite{mrc} which relaxes the condition $L=O(n/m)$ to $L=O(n^{\eta'})$ for a constant $\eta'\in(0,1)$.

$\mathbf{MPC}$ and MRC are contained in $\mathbf{P}$, more specifically, subquadratic and near-linear time algorithms, respectively, and they both contain $\mathbf{NC}^1$ (the class of problems solvable using a PRAM machine with linear memory in logarithmic time)~\cite{goodrich2}.
For a set of values $x_1,\ldots,x_n$ and an associative binary operator $\oplus$, the parallel semi-group computation $x_1\oplus\cdots\oplus x_n$ and the parallel prefix computation $x_1\oplus \cdots \oplus x_i$, for all $i=1,\ldots,n$ can be computed in $O(\log_m n)$ rounds in MPC~\cite{goodrich2,frei2019efficient}. Parallel semi-group is useful for aggregations such as summation, minimum, and maximum computation and parallel prefix can be used for dynamic programs on trees with constant depth.
Any algorithm with $O(\log^{i+1} n)$ running time in PRAM has $O(\log^i )$ rounds in MRC, for $i\geq 1$~\cite{frei2019efficient}. So, the algorithm for WSPD-spanner in PRAM with $O(\log^2 n)$ time takes $O(\log n)$ rounds in MRC~\cite{callahan1995decomposition}.

Our main contributions are:
\begin{itemize}
\item designing an algorithm for constructing a version of the range tree in $\mathbf{MPC}$ that we call a sparsified range tree (SRT), and
\item spanners built inside squares of sublinear size can be augmented using Yao spanner edges in $\mathbf{MPC}$.
\end{itemize}

\Cref{table:spanners} shows a summary of results.
Note that since $\mathbf{MPC}$ is more restricted than MRC, all $\mathbf{MPC}$ algorithms work in MRC, too.

\begin{table}[h]
\centering
\begin{tabular}{|c|c|c|c|c|}
\hline
Spanner & Model & Size & Rounds & References\\
\hline
\hline
Yao-Graph & MRC & $O(n)$ & $O(1)$ & \cite{mine} \\
& $\mathbf{MPC}$ & $O(n)$ & $O(1)$ & \Cref{sec:yao}\\
\hline
\end{tabular}
\caption{Yao-Graph in MapReduce Models}
\label{table:spanners}
\end{table}

\section{Preliminaries}
\subsection{A Balanced Grid in $\mathbf{MPC}$}
To use a balanced grid with $m$ points in each cell in $\mathbf{MPC}$~\cite{mine}, build a grid $G$ by the split lines in each dimension, then, merge every $\sqrt{m}$ cell in each dimension. Map the points to the cells of the resulting grid to partition them.

\subsection{Range Tree in Sequential and PRAM Models}
Range tree~\cite{bentley1979decomposable} is a tree data structure for querying the set of points inside rectangular ranges. The construction of 1D range tree builds a binary search tree on one of the dimensions and stores the maximum of the left subtree in each node. For higher dimensions, first, a 1D tree is built for the first dimension, then, for each internal node, a range tree is built for the next dimension, and this continues until all dimensions have trees. The construction of range tree on $n$ points in $d$-dimensional space takes $O(n\log^d n)$ time and the query time for the number of points inside the range using this tree is $O(\log^d n)$. Range tree in 2D in PRAM can be built in $O(\log^2 n)$ time using $O(n \log n)$ processors~\cite{lopez2020parallel}.
\subsubsection{2-Sided Range Queries for $\ell_1$ Nearest Neighbor}
Given a point set $P$, $2$-sided range queries take a rectangular shape $B$ and a corner $q=(x_0,y_0)$ of $B$ as their input, nearest neighbor queries in $\ell_1$ find the closest point to $q$ in $P$ using the $\ell_1$ distance $(\lVert p-q\rVert)$, where the $\ell_1$ norm of a vector $\tau=(\tau_1,\ldots,\tau_d)$ is defined as $\lVert \tau \rVert_1=\sum_{i=1}^d \lvert \tau_i \rvert$.
The region defined by one of the four following cases is called a $2$-sided range~\cite{arge2001external}:
 $x\geq x_0, y\geq y_0$,
 $x\geq x_0, y\leq y_0$,
 $x\leq x_0, y\geq y_0$, and
 $x\leq x_0, y\leq y_0$.

The nearest neighbor of the point with the $i$-th and the $j$-th smallest coordinates is denoted by $NN[i][j]$ and the set of points in the cell that corresponds to the range $[x_i,x_{i+1}]\times[y_j,y_{j+1}]$ is denoted by $T[i][j]$. The recursive formula for the nearest neighbor in the set of cones that are parallel to the positive directions of the axes is
\[
NN[i][j]=\min (T[i][j], NN[i-1][j], NN[i][j-1]),
\]
\[
T[i][j]=\begin{cases}
\emptyset &(x_i,y_j)\ni P\\
(x_i,y_j) &(x_i,y_j)\in P\\
\end{cases}.
\]

\begin{lemma}[\cite{mine}]\label{lemma:semigroup}
There is an $O(1)$-round algorithm for $2$-sided and rectangular range queries in the massively parallel computations model that solves $\ell_1$ nearest neighbor queries on $n$ points (set $P$).
\end{lemma}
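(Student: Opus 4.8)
The plan is to reduce each $\ell_1$ nearest-neighbor query in a $2$-sided range to the optimization of a single linear functional, and then to evaluate all such optima simultaneously by a constant number of parallel prefix computations on a balanced grid. The starting observation is that inside a fixed quadrant the $\ell_1$ distance is linear in the coordinates: for the range $x\ge x_0,\ y\ge y_0$ one has $\lVert p-q\rVert_1=(x+y)-(x_0+y_0)$, so the nearest neighbor is exactly the point of that quadrant minimizing $x+y$; the other three quadrants are handled symmetrically by minimizing or maximizing $x+y$ or $x-y$. The query value therefore depends only on which of the four $2$-sided ranges is used and not on the exact corner $q$, which is what converts the geometric query into a monotone aggregation along the grid.

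First I would invoke the balanced-grid construction of the preliminaries to partition the points into $O(n/m)$ cells arranged in a $\Theta(\sqrt{n/m})\times\Theta(\sqrt{n/m})$ array with $O(m)$ points per cell. Assigning to every cell its extremal point for each of the four functionals is a local, single-machine task on the $O(m)$ points of a cell and runs in $O(1)$ rounds; this yields one representative value $T[i][j]$ per cell and per direction, on which the recurrence $NN[i][j]=\min(T[i][j],NN[i-1][j],NN[i][j-1])$ computes the quadrant optimum over the block of cells dominated by $(i,j)$.

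The central step is to evaluate this recurrence in $O(1)$ rounds. Because $\min$ (and $\max$) is associative and idempotent, the two-dimensional prefix $\min_{i'\le i,\,j'\le j}T[i'][j']$ factors into a prefix along each row followed by a prefix along each column, so it reduces to ordinary parallel prefix. Each one-dimensional prefix runs over $\Theta(\sqrt{n/m})$ values and, by the cited semigroup/prefix bound, takes $O(\log_m n)$ rounds; since $m=\Theta(n^{\eta})$ for a constant $\eta$, this is $O(1/\eta)=O(1)$, and running the four orientations (obtained by reflecting the coordinates so that each quadrant becomes the positive one) multiplies the round count only by a constant.

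The remaining difficulty, which I expect to be the crux, is that the query corner only partially covers the cells in its own row and column: a whole band holds $\Theta(\sqrt{nm})$ points and does not fit on a single machine, so its in-range contribution cannot simply be read off the coarse prefix. The plan is to resolve each band recursively by the same construction --- a band is itself a lower-dimensional dominance instance --- and to note that the point count of a band drops from $N$ to $\Theta(\sqrt{Nm})$ per level, so for $m=\Theta(n^{\eta})$ with constant $\eta$ the recursion reaches cells of $O(m)$ points (solvable on one machine) after $O(\log(1/\eta))=O(1)$ levels. Each level contributes a constant number of prefix passes, so the whole procedure uses $O(1)$ rounds, and since every cell and prefix segment carries only $O(1)$ aggregated values per direction, the total communication stays $O(n)$ and no machine exceeds $O(m)$ memory. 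Taking the minimum of the coarse optimum, the recursively computed band optima, and the in-cell optimum then gives the exact $\ell_1$ nearest neighbor for each of the four $2$-sided ranges, and a rectangular range follows by intersecting two of them along a common split line, establishing the lemma.
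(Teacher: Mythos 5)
The paper does not actually prove this lemma: it is imported from \cite{mine}, and the only supporting material in the text is the recurrence $NN[i][j]=\min(T[i][j],NN[i-1][j],NN[i][j-1])$ together with the $O(\log_m n)$-round bound for parallel prefix and semigroup computations. Your treatment of the $2$-sided case follows exactly that intended route: the observation that the $\ell_1$ distance restricted to a fixed quadrant is the linear functional $\pm x\pm y$ up to an additive constant is what makes the nearest-neighbor aggregation associative, and the two-dimensional dominance prefix then factors into row prefixes followed by column prefixes, each taking $O(\log_m n)=O(1/\eta)=O(1)$ rounds. Your band-handling step is an artifact of working on a coarse $\sqrt{n/m}\times\sqrt{n/m}$ grid rather than on the rank-indexed grid implicit in the paper's recurrence, but it is sound; in fact a band is already a one-dimensional instance (one of the two constraints is satisfied by every point of the band), so a single sorted suffix-minimum over each band suffices and the $O(\log(1/\eta))$-level recursion is more machinery than you need.

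The genuine gap is the final clause, where you dispose of rectangular range queries by ``intersecting two of them along a common split line.'' The nearest neighbor in an intersection of two $2$-sided ranges is not determined by the nearest neighbors in the two ranges separately: the minimizer of $x+y$ over $A\cap B$ need not be the minimizer over $A$ or over $B$, so nothing you have computed answers a query on a bounded rectangle. Once the range is four-sided the prefix structure is no longer available and you need a range-minimum structure instead, e.g.\ a canonical decomposition of the rectangle into polylogarithmically many tree nodes, each storing its extremal point for the four functionals --- which is essentially what the paper later builds with the SRT and the covering sets $U_{i,j}$ in \Cref{theorem:2sided}. As written, your argument establishes the $2$-sided half of the lemma but not the rectangular half.
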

Segment tree for storing and answering range queries in $\mathbf{MPC}$ exists~\cite{jdsa}.
\subsubsection{Yao-graph in Sequential and $\mathbf{MRC}$ Models}
For a set of $n$ points, the Yao-graph spanner with $k$ cones partitions the angle around each point into $k$ equal cones such that the first side of the first cone is parallel to the $x$ axis; Then, each point is connected to its nearest neighbor among the points that are inside each cone.
The size (the number of edges) of this spanner is $O(kn)$, its stretch factor is $\frac{1}{\cos(2\pi/k)-\sin(2\pi/k)}$ for $k\geq 9$, and it can be constructed in $O(n\log n)$ time~\cite{bose2004approximating}. These bounds have been improved and Yao-graph for $k\geq 4$ is also a spanner~\cite{spanner}.
We review a result from~\cite{mine}.

\begin{lemma}[\cite{mine}]\label{lemma:mine}
The stretch factor of the resulting spanner is $1+\frac{\theta^2}{8}+o(\theta^2)$ for the Euclidean distance.
\end{lemma}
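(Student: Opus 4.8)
The plan is to reduce the claim to the Taylor expansion $\sec(\theta/2) = 1 + \theta^2/8 + o(\theta^2)$, and then to show that replacing the true Euclidean nearest neighbor inside each cone by the $\ell_1$ nearest neighbor supplied by \Cref{lemma:semigroup} inflates the Euclidean length of every spanner edge by at most the factor $\sec(\theta/2)$. The whole argument is local to a single cone, so I would fix a point $p$ and one of its $k$ cones of opening angle $\theta = 2\pi/k$, and rotate the coordinate frame so that the bisector of the cone points along the $\ell_1$ diagonal direction $(1,1)/\sqrt2$. I parametrize any other point $u$ of the cone by its Euclidean distance $\rho = \lVert u-p\rVert_2$ and its angular deviation $\delta$ from the bisector, with $\lvert\delta\rvert \le \theta/2$.

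The key identity is that, in this frame and for $\theta<\pi/2$ so that the entire cone lies in a single $\ell_1$ quadrant, the $\ell_1$ length of $u-p$ equals $\sqrt2\,\rho\cos\delta$; that is, the $\ell_1$ nearest-neighbor rule inside the cone is exactly the rule minimizing $\rho\cos\delta$. I would then compare the point $r$ selected by this rule with the true Euclidean nearest neighbor $r^\ast$ (the minimizer of $\rho$). Optimality of $r$ for the $\ell_1$ objective gives $\rho_r\cos\delta_r \le \rho_{r^\ast}\cos\delta_{r^\ast} \le \rho_{r^\ast}$, and since $\lvert\delta_r\rvert\le\theta/2$ this rearranges to $\rho_r \le \rho_{r^\ast}/\cos(\theta/2) = \sec(\theta/2)\,\rho_{r^\ast}$. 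Expanding $\sec(\theta/2)$ yields the stated $1+\theta^2/8+o(\theta^2)$, and the bound is tight by taking $\delta_r=\theta/2$, $\delta_{r^\ast}=0$.

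Finally I would propagate this per-cone, per-edge factor to the global Euclidean stretch by a telescoping argument along the path produced by repeatedly stepping into the cone that currently contains the destination. The target is to show that each hop's Euclidean length is at most $\sec(\theta/2)$ times the progress it makes toward the destination, so that the telescoped sum is bounded by $\sec(\theta/2)$ times the straight-line Euclidean distance, i.e. $1+\theta^2/8+o(\theta^2)$.

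I expect the crux to be precisely this last propagation, and in particular pinning the loss to the half-angle $\theta/2$ rather than the full opening angle $\theta$: bounding the angle between a chosen edge and the direction to the destination only yields $\sec\theta\approx 1+\theta^2/2$, twice the claimed second-order term. The improvement to $\sec(\theta/2)$ must come from measuring progress against the cone bisector—equivalently, against the $\ell_1$ projection that \Cref{lemma:semigroup} already minimizes—so that both the edge and the destination sit within $\theta/2$ of the quantity being charged. Making the telescoping rigorous then requires a monotone-progress invariant guaranteeing that the destination remains in a forward cone at every step, handling the change of bisector between consecutive hops, and controlling the $o(\theta^2)$ remainder uniformly across all $k$ cones; this is where the genuine work of the proof sits.
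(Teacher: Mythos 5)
The paper does not actually prove \Cref{lemma:mine}; it is quoted from \cite{mine} without an argument, so there is no in-paper proof to measure you against and I can only assess your plan on its own terms.

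The first half of your plan is sound and is a genuinely useful observation: with the cone bisector rotated onto $(1,1)/\sqrt2$ and $\theta<\pi/2$, the identity $\lVert u-p\rVert_1=\sqrt2\,\rho\cos\delta$ is correct, so the $\ell_1$ rule is exactly the $\Theta$-graph rule (minimize the projection onto the bisector), and your single-edge comparison $\rho_r\le\sec(\theta/2)\,\rho_{r^\ast}$ with $\sec(\theta/2)=1+\theta^2/8+O(\theta^4)$ is right. The gap is the telescoping step, and it is worse than the ``factor of two in the second-order term'' issue you flag: the per-hop invariant you are aiming for --- each hop's Euclidean length is at most $\sec(\theta/2)$ times the progress it makes toward the destination --- is false, and the failure is first order in $\theta$, so no choice of reference direction (bisector versus the line to the destination) can rescue it. Take $p$ at the origin, $q$ at Euclidean distance $1$ in direction $-\theta/2$ from the bisector, and $r$ at distance $1-\varepsilon$ in direction $+\theta/2$, both inside the same cone of $p$. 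The $\ell_1$/projection rule selects $r$, the hop has length $1-\varepsilon$, while the progress is $|pq|-|rq|=1-2\sin(\theta/2)+O(\varepsilon)=1-\theta+O(\theta^3)+O(\varepsilon)$, so the ratio is $1+\theta+O(\theta^2)$, far above $\sec(\theta/2)$.

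Worse, this three-point instance shows that the obstruction is not an artifact of your charging scheme: the only $p$--$q$ path in the resulting graph is $p\to r\to q$, of length $1+2\sin(\theta/2)-O(\varepsilon)\to 1+\theta-O(\theta^3)$ against a straight-line distance of $1$, so the plain Yao/$\Theta$ construction has worst-case stretch $1+\Omega(\theta)$, not $1+O(\theta^2)$. (This matches the bounds the paper itself cites, e.g.\ $1/(\cos(2\pi/k)-\sin(2\pi/k))=1+\Theta(\theta)$.) Consequently your plan cannot be completed as stated: either \cite{mine} establishes \Cref{lemma:mine} for a construction with additional structure beyond ``connect each point to the $\ell_1$-nearest neighbor in each cone'' (in which case the missing idea is precisely that structure and it must enter the propagation step), or the statement should be read as a per-edge approximation guarantee --- which is exactly what your first half proves --- rather than a bound on the global stretch factor. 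As written, the telescoping step is not ``genuine work remaining'' but a step that provably fails, and you should not present the first half as evidence that the global bound is within reach.
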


\begin{figure}[h]
\centering
\begin{subfigure}{0.45\textwidth}
\centering
\includegraphics[scale=0.8]{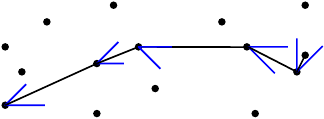}
\caption{Original method (sequential model)}
\end{subfigure}
\hfill
\begin{subfigure}{0.45\textwidth}
\centering
\includegraphics[scale=0.8]{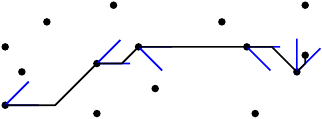}
\caption{Mapping from $\ell_1$ (MRC model)}
\end{subfigure}
\caption{A shortest path in Yao-graph with $8$ cones around each point $(k=8)$.}
\end{figure}

\section{Yao-Graph in $\mathbf{MPC}$}\label{sec:spanner}

First, we introduce a new tree data structure for $2$-sided range queries and use it to approximate a Yao-graph spanner with near-linear size.
\subsection{Sparsified Range Tree (SRT)}
For windowing range queries, we build a multi-dimensional range tree and use its cells to define the queries in the first two dimensions, while preserving other properties in the rest of the dimensions.
Remember that a range tree is constructed in a bottom-up manner. The recursive computation used in this construction is a semi-group computation and is therefore computable via a $m$-ary recursion tree~\cite{goodrich2}. So, the $\mathbf{MPC}$ simulation of range tree takes $O(\log_m n)=O(1/\eta)$ rounds and $O(n\log_m n)=O(n/\eta)$ space.

\begin{algorithm}[h]
\caption{Sparsified Range Tree in $\MPC(m)$}
\label{alg:flat}
\begin{algorithmic}[1]
\Require{The set of points $P$ in $\mathbb{R}^d$}
\Ensure{A SRT of $P$}
\For{$i=1$ to $d$}
\State{$S_i=$ elements of rank $\{jm\mid j=1,\ldots,\lfloor \frac{n}{m}\rfloor\}$ from $P$.}
\EndFor
\State{$G\gets$ the $d$-dimensional grid with cells $S_1\times S_2\times \cdots \times S_{d}$ and assign the points of $P$ to the cells containing them.}
\State{Build a range tree $T$ with the cells of $G$ as its leaves and define the rest of $T$ recursively by merging every $\sqrt[d]{m}$ consecutive node in each dimension.}
\\ \Return $T$
\end{algorithmic}
\end{algorithm}
\Cref{alg:flat} computes a SRT by first creating a grid, then, it indexes them based on a range tree built on the grid, and finally, it merges the nodes of the tree similar to a B-tree to reduce the height of the tree. The optional last step can aggregate the results to get approximate results if an objective function is given.
Summing values on a tree is a semi-group computation, so, it takes $O(\log_m n)=O(\log n / \log m)=O(1/\eta)$ rounds. There are $d+2$ such computations in the algorithm, resulting in $O(d/\eta)$ rounds in total.

\begin{lemma}
A SRT of $n$ points has $O(1)$ depth and $O(n/m)$ nodes in $\MPC(m)$.
\end{lemma}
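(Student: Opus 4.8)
The plan is to derive both bounds from two facts about the tree $T$ returned by \Cref{alg:flat}: its number of leaves and its branching factor. First I would bound the number of leaves, which are exactly the nonempty cells of the grid $G$. The split values $S_i$ place a boundary after every $m$-th point along each coordinate, and together with the balanced-grid refinement described in the preliminaries this makes each retained cell contain $\Theta(m)$ of the $n$ input points. Because the cells partition $P$, the number of nonempty cells, and hence the number of leaves of $T$, is $\Theta(n/m)$.

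Next I would turn to the tree built on top of these leaves. Merging every $\sqrt[d]{m}$ consecutive nodes in each of the $d$ coordinate directions collapses $(\sqrt[d]{m})^{\,d}=m$ cells into one parent at every level, so $T$ is essentially an $m$-ary tree on $\Theta(n/m)$ leaves. Its depth is therefore $\log_m(n/m)=\frac{\log(n/m)}{\log m}$; substituting $m=\Theta(n^{\eta})$ gives $\frac{1}{\eta}-1=O(1/\eta)$, which is $O(1)$ for constant $\eta$ (and remains $O(1)$ even after the harmless factor $d$ that appears if the merges are performed one direction at a time). The node count then follows by summing a geometric series over the levels: the $k$-th level above the leaves holds $\Theta\!\left(\tfrac{n}{m}\,m^{-k}\right)$ nodes, so the total is $\Theta(n/m)\sum_{k\ge 0} m^{-k}=\Theta(n/m)\cdot\frac{m}{m-1}=O(n/m)$.

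The step I expect to be the main obstacle is the leaf count, namely showing that $G$ has only $O(n/m)$ nonempty cells rather than the $(n/m)^{d}$ cells of a full product grid. The danger is that independently chosen rank-$jm$ splits in each coordinate can leave cells with far fewer than $m$ points, and a direct count then only guarantees $O(n)$ nonempty cells in two or more dimensions. Overcoming this requires invoking the balanced-grid construction so that every surviving cell is guaranteed to hold $\Theta(m)$ points; once that occupancy lower bound is established, the partition argument caps the leaves at $O(n/m)$ and the depth and node-count calculations above go through as routine geometric estimates.
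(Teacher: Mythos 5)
Your depth and node-count calculations follow the same structure as the paper's argument: the paper likewise observes that merging $\sqrt[d]{m}$ consecutive nodes per dimension yields an $m$-ary tree, reads off $O(n/m)$ nodes, and bounds the depth by $O(\log_m n)=O(1/\eta)=O(1)$. The explicit geometric series over levels that you add is fine and is implicit in the paper's terser statement.

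The genuine problem is exactly the step you flag as the obstacle: the claim that every nonempty cell of $G$ contains $\Theta(m)$ points. This is not established by \Cref{alg:flat}, and it is false in general for $d\ge 2$. The splits $S_i$ are chosen independently in each coordinate by rank, which guarantees that each \emph{slab} (the region between two consecutive splits in one coordinate) contains $m$ points, but says nothing about individual cells of the product grid. For $d=2$, a permutation-matrix-like configuration (the $y$-ranks of the $m$ points in each $x$-slab landing in $m$ distinct $y$-slabs) puts exactly one point in every nonempty cell, so the number of nonempty cells is $\min\bigl(n,(n/m)^2\bigr)$, which exceeds $n/m$ whenever $\eta<1$. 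Appealing to the ``balanced grid'' of the preliminaries does not repair this: that construction also only balances slabs, not cells, and \Cref{alg:flat} does not invoke it in any case. So your leaf bound of $\Theta(n/m)$, and with it the $O(n/m)$ node count, does not follow from the stated construction. (The paper's own proof asserts the $O(n/m)$ node count directly from the branching factor without counting leaves, so it leaves the same question open; but as a self-contained argument, yours hinges on an occupancy lower bound the construction does not provide.) The $O(1)$ depth bound survives regardless, since even $(n/m)^d$ leaves under an $m$-ary merge give depth $O(d\log_m(n/m))=O(d/\eta)=O(1)$.
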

\begin{proof}
Each node has $O(m)$ children since there are $O(\sqrt[d]{m})$ points in each dimension. So, the tree has $O(n/m)$ nodes and $T$ is a $m$-ary tree.
The depth of the tree is $O(\log_m n)=O(\log n / \log (n^{\eta}))=O(1/{\eta})=O(1)$ since $m=\Theta(n^{\eta})$ in $\MPC(m)$.
\end{proof}

\subsection{2-Sided Range Nearest Neighbor Query using SRT}
We give an algorithm for nearest neighbors using $2$-sided range queries that performs a parallel prefix on a SRT. The proof of the associativity of the nearest neighbor inside a cone using $\ell_1$ distance exists~\cite{mine} which is needed for the parallel prefix computation.
We approximate the Euclidean nearest neighbor by the nearest neighbor using $\ell_1$ distance.
The dimensions of the tree are the Cartesian coordinates $x$ and $y$, so each node of the tree defines four $2$-sided ranges.
\begin{algorithm}
\caption{Approximate $\ell_1$ Nearest Neighbor using SRT in $\MPC(m)$}
\label{alg:nn}
\begin{algorithmic}[1]
\Require{A point set $P$}
\Ensure{Approximate nearest neighbors in the cones around input points.}
\State{$T=$ A SRT constructed using \Cref{alg:flat} on $P$.}
\State{find the $\ell_1$ nearest neighbors of the corners of the cells of the leaves of $T$.}
\For{each leaf $(i,j)$ of $T$ \textbf{in parallel}}
\State{$Q=Q\;\cup$ compute the range query of nodes $U_{i,j}$ of $T$ in each dimension.}
\EndFor
\State{compute the queries $Q$.}
\State{find the nearest neighbors of all the leaves of $T$ using the queries from $Q$.}
\State{locally compute the nearest neighbors of the points of $P$.}
\end{algorithmic}
\end{algorithm}

\Cref{alg:nn} takes $O(1/\eta)$ rounds and $O(n)$ work since $T$ has height $O(1/\eta)$.

\begin{theorem}\label{theorem:2sided}
All $2$-sided range queries for $\ell_1$ nearest neighbor can be computed in $\MPC(m)$ using \Cref{alg:nn}.
\end{theorem}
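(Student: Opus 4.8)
The plan is to prove \Cref{theorem:2sided} in two parts: first establishing correctness by showing that the three phases of \Cref{alg:nn}---local initialization, tree-based propagation, and local refinement---jointly evaluate the recurrence $NN[i][j]=\min(T[i][j],NN[i-1][j],NN[i][j-1])$ for each of the four 2-sided ranges, and then bounding the round and work complexity using the $O(1)$ depth of the SRT established in the preceding lemma.

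First I would fix one quadrant orientation (say $x\ge x_0,\,y\ge y_0$); the other three are symmetric and are handled simultaneously by the other three ranges attached to each node of the tree. For this orientation, the 2-sided range nearest neighbor anchored at the grid corner $(x_i,y_j)$ is exactly the prefix-minimum $NN[i][j]$ of the recurrence in the Preliminaries, where the $\min$ selects the $\ell_1$-closest candidate to the anchor. The initialization line assigns to each leaf cell its local value $T[i][j]$, a computation within a single cell of $O(m)$ points that fits on one machine. The propagation lines then evaluate the prefix-minimum over the grid by a parallel prefix on the SRT: because the $\ell_1$ nearest-neighbor operation on 2-sided ranges is associative (\cite{mine}, as used in \Cref{lemma:semigroup}), the partial results stored at the $U_{i,j}$ nodes combine correctly across levels, supplying the horizontal and vertical prefixes $NN[i-1][j]$ and $NN[i][j-1]$. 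Finally, the refinement line turns a per-cell answer into a per-point answer: for $p\in P$ in cell $(i,j)$, its quadrant nearest neighbor is the closer of the inherited $NN$ value and the nearest qualifying point within $p$'s own cell, again a local $O(m)$-size computation.

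The hard part will be verifying that the canonical decomposition of a quadrant over the $m$-ary SRT is exhaustive, i.e.\ that every cell dominated by the anchor corner contributes a candidate to the associative $\min$. Unlike an additive range-sum, disjointness is not required here: since $\min$ is idempotent, a cell counted more than once is harmless, and the only genuine risk is omitting a dominated cell. I would establish exhaustiveness by induction on the levels of the SRT, showing that the prefix value stored at each node already accounts for every descendant leaf-cell in its subtree, so that combining the $O(m)$ nodes that tile the prefix at each level covers the whole quadrant. Each quadrant therefore touches only $O(m\cdot\log_m n)=O(m)$ canonical nodes, which fits inside the $\Theta(n^{\eta})$ memory of a single machine.

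For the complexity I would note that every constituent step is a semi-group or prefix computation on the SRT. By the depth lemma the tree has height $O(\log_m n)=O(1/\eta)$, so each parallel-prefix pass costs $O(1/\eta)$ rounds, and the constant number of passes keeps the total at $O(1/\eta)=O(1)$ rounds. Since each point of $P$ participates in a constant number of queries and each cell is touched a constant number of times, the work telescopes to $O(n)$, placing the algorithm in $\MPC(m)$ and completing the claim.
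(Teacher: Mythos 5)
Your proposal is correct in substance and rests on the same ingredients as the paper --- the SRT, the associativity of the $\ell_1$ cone nearest neighbor from \Cref{lemma:semigroup}, the $O(1)$ depth bound, and a final local refinement step --- but it organizes the argument along a genuinely different axis. The paper's proof is a canonical-decomposition argument: each $2$-sided query is covered by $O(\log^2 n)$ rectangular nodes of $T$ (the sets $U_{i,j}$), each such node's answer is obtained by subtree aggregation via \Cref{lemma:semigroup}, and the resulting $O(n\log^2 n)$ total query count is absorbed by the remark that $\MPC(m)$ tolerates polylogarithmic factors in space. You instead argue through the recurrence $NN[i][j]=\min(T[i][j],NN[i-1][j],NN[i][j-1])$ as a parallel prefix over the SRT, using idempotence of $\min$ to dismiss double counting and an exhaustiveness induction to rule out omitted cells; this matches the prose framing that precedes \Cref{alg:nn} and buys a tighter $O(n)$ work bound in place of the paper's $O(n\log^2 n)$. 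One count in your sketch deserves repair: the canonical cover of a two-dimensional dominance quadrant in an $m$-ary range tree has $O((m\log_m n)^2)$ nodes, not $O(m\log_m n)$, so the claim that one quadrant's canonical nodes fit in a single machine's $\Theta(n^{\eta})$ memory does not follow as stated; your conclusion still stands if the combination is performed level by level across machines (or if, as the paper does, one simply tolerates $\polylog$ and per-level $\mathrm{poly}(m)$ overheads), but that step should be made explicit.
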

\begin{proof}
The tree has $m$ children in each internal node and $m$ points in each leaf, so, the number of nodes in $T$ is $O(n/m)$.
Using \Cref{lemma:semigroup}, nearest neighbor queries for rectangular ranges that appear as the nodes of the tree can be computed by aggregating the values of each subtree in its root, which takes $O(\log^2_m n)$ parallel time in $\MPC(m)$. This gives the nearest neighbor to each corner of the rectangular range defined by each node of the tree.

The rest of the queries can be computed by querying the range tree $T$ similar to the sequential version using the ranges of the $2$-sided query and merging the results by taking the minimum distance to the apex of the query. Each $2$-sided range query can be covered by at most $O(\log^2 n)$ rectangular queries from the tree (set $U_{i,j}$ for the children of node $(i,j)$). There are $O(n\log^2 n)$ queries in $Q$, since each point appears in at most $O(\log^2 n)$ cells.
%
Computing these queries is in $\MPC(m)$ because it ignores polylogarithmic factors in space. Local computations do not require extra rounds.
\end{proof}
\Cref{theorem:2sided} is enough to build a Yao-graph with $4$ cones (using $\ell_1$ distance).  A similar approach to \cite{mine} can generalize this method to use more cones.

\subsubsection{A $\mathbf{MPC}$ Algorithm for Yao-Graph}\label{sec:yao}
\Cref{lemma:inside} proves the stretch factor of Yao-graph for points inside SRT cells.
\begin{lemma}\label{lemma:inside}
Any part of Yao-graph that falls inside a polygon with edges parallel to the sides of the cones of Yao-graph is a spanner for the points inside that polygon with the same stretch factor as the Yao-graph on all the input points.
\end{lemma}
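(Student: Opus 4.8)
The plan is to adapt the standard greedy cone-routing argument for Yao-graphs to the subgraph lying inside $P$, and to use the hypothesis that the edges of $P$ are parallel to the cone sides only in order to guarantee that this routing never has to leave $P$. First I would recall the inductive proof that the Yao-graph with cone angle $\theta=2\pi/k$ is a $t$-spanner: given two points $u,v$, let $C$ be the cone of $u$ that contains $v$ and let $w$ be the point of $C$ nearest to $u$ (the Yao edge chosen in that cone). The cone inequalities $|uw|\le|uv|$ and $\angle wuv\le\theta$ give $|wv|\le|uv|-(\cos\theta-\sin\theta)\,|uw|$, so $|wv|<|uv|$, and induction on the rank of the pairwise distance produces a path $u\to w\to\cdots\to v$ of length at most $t\,|uv|$. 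The value of $t$ depends only on $\theta$ (this is exactly \Cref{lemma:mine}) and not on the underlying point set, so whatever set of points we route among, the stretch factor is the same $t$; this already accounts for the ``same stretch factor'' part of the statement.

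The geometric heart of the proof is a locality claim: if $q\in P$ and a cone $C_i(q)$ contains at least one input point lying inside $P$, then a nearest-in-cone neighbour of $q$ can be taken inside $P$ and its edge lies inside $P$. I would prove this from the parallel-edge hypothesis: each cone $C_i(q)$ is bounded by two rays in fixed directions, and by assumption every edge of $P$ is parallel to one of these directions, so $C_i(q)\cap P$ is itself a cone of opening angle $\theta$ truncated by segments parallel to its own sides. Consequently the nearest point of $P\cap C_i(q)$ to $q$ still satisfies the same angle and distance inequalities as above, and the segment from $q$ to it stays in $P$ (here I use that the cells of the SRT, to which the lemma is applied, are convex). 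Running the greedy step inside the truncated cone therefore makes the same metric progress while keeping both the successor and its edge inside $P$.

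With the locality claim in hand I would run the greedy routing for an arbitrary pair $u,v\in P$ entirely inside $P$. The invariant is that the current vertex $q$ and the target $v$ both lie in $P$; since $v\in C_i(q)$ for some $i$, the locality claim supplies a successor $w\in P$ with $qw\subseteq P$ and $|wv|<|qv|$, and the invariant is restored with $q$ replaced by $w$. The induction terminates because the distance to $v$ strictly decreases, and the accumulated path lies entirely among the edges of the Yao-graph that fall inside $P$; by the first paragraph its length is at most $t\,|uv|$. Hence the part of the Yao-graph inside $P$ is a $t$-spanner for the points inside $P$.

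The step I expect to be the main obstacle is the locality claim. The delicate point is that the nearest neighbour of $q$ inside $C_i(q)$ taken over \emph{all} input points may lie outside $P$; then the global Yao edge leaves $P$ and is discarded, so one must show that routing does not depend on it. This is precisely where the parallel-edge hypothesis does the work: because $C_i(q)\cap P$ is a cone of unchanged angle truncated by lines parallel to its sides, an interior neighbour still yields valid progress, and the symmetry of the cone partition should guarantee that whenever a forward edge is lost to the exterior the matching edge from the neighbour's symmetric cone lies inside $P$. Making this case analysis precise — in particular treating points that fall on cone boundaries and checking that no discarded exterior edge is ever needed — is the crux of the argument.
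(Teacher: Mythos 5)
Your overall architecture --- reduce the lemma to the standard greedy cone-routing argument, whose stretch bound depends only on the cone angle (\Cref{lemma:mine}), and then prove a locality claim guaranteeing that the routing never needs an edge that was cut away --- is the right way to attack this statement, and it is an expanded version of what the paper actually does. The paper's own proof is essentially your dichotomy compressed into two sentences: at each routing step the nearest-in-cone neighbor is either inside the cut region or its other endpoint has been removed, and it asserts without further argument that neither case changes the shortest path. So the step you single out as the crux is exactly the step the paper treats as immediate.

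The problem is that you do not close that step, and as written your locality claim does not close it either. The lemma concerns the restriction of the Yao-graph built on \emph{all} input points, so the only edge available at $q$ in cone $C_i(q)$ is the edge to the \emph{global} nearest neighbor in that cone. Your locality claim instead produces the nearest neighbor among the points of $C_i(q)$ interior to the polygon; when the global nearest neighbor lies outside the polygon, the segment from $q$ to that interior point is in general not an edge of the global Yao-graph at all, hence not present in its restriction. Concretely, with $k=4$ and an axis-parallel cell, take $q$ near the left edge, $v$ deep inside the cell in the upper-right cone of $q$, and a point $w$ just beyond the right edge of the cell with $\lVert q-w\rVert<\lVert q-v\rVert$ and $w\in C_i(q)$: the unique Yao edge of $q$ in that cone goes to $w$, is discarded, and $q$ retains no edge in the cone containing $v$, so the greedy induction cannot even start. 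Your suggested repair --- that the matching edge from the neighbour's symmetric cone lies inside the polygon --- does not obviously help, since every edge incident to the exterior point $w$ is likewise discarded. To finish you would need either an argument that each such escaping step can be replaced by a path through interior points, or a reinterpretation of the lemma as being about the Yao-graph recomputed on the interior points only (for which your first paragraph already suffices, since the stretch of a Yao-graph depends only on $k$ and not on the point set). As it stands the proposal names the gap honestly but leaves it open, and the paper's one-line dismissal of the same case does not supply the missing argument either.
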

\begin{proof}
Routing in Yao-graph uses the nearest neighbor in each cone, which is either inside the cut region or the other endpoint has been removed. None of these cases change the shortest path between two points.
\end{proof}

In \Cref{alg:wspd}, we give a Yao spanner for the $\ell_1$ distance.
\begin{algorithm}[h]
\caption{Yao-Graph in $\MPC(m)$}
\label{alg:wspd}
\begin{algorithmic}[1]
\Require{A point set $P$, a constant $s$}
\Ensure{The adjacency list $A$ of a spanner graph}
\State{$\epsilon=8/(s-4)$}
\State{$k=\lfloor \frac{\pi}{\sqrt{\epsilon}}\rfloor$}
\State{$P'=\{(\theta,x,y)\mid \forall (x,y)\in P,\; \theta=\lfloor \frac{\tan^{-1}(y/x)}{(2\pi/k)} \rfloor\}$}
\For{$i=0, \ldots, k-1$ \textbf{in parallel}}
\State{Compute the nearest neighbor queries using \Cref{alg:nn} on $P'$.}
\EndFor
\\ \Return {the graph $H=(P,E)$, $E=$ the edge between the nearest neighbors.}
\end{algorithmic}
\end{algorithm}

\begin{theorem}\label{theorem:stretch}
\Cref{alg:wspd} gives a $(1+\epsilon)$-approximation spanner.
\end{theorem}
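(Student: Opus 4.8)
The plan is to establish two things: that \Cref{alg:wspd} actually produces the $k$-cone Yao-graph, and that the cone count $k=\lfloor \pi/\sqrt{\epsilon}\rfloor$ forces the stretch factor below $1+\epsilon$. First I would argue correctness of the construction. The assignment $\theta=\lfloor \tan^{-1}(y/x)/(2\pi/k)\rfloor$ labels each directed edge by which of the $k$ equal cones of aperture $2\pi/k$ it lies in. Within a fixed cone index $i$, rotating the plane so that the cone becomes axis-aligned turns ``the nearest neighbor of $p$ among the points inside cone $i$'' into a $2$-sided range nearest-neighbor query with apex $p$. By \Cref{theorem:2sided}, \Cref{alg:nn} computes all such $2$-sided queries correctly, so the parallel loop over $i=0,\dots,k-1$ connects every point to its ($\ell_1$-)nearest neighbor in each cone. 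Hence the returned graph $H$ is exactly the $k$-cone Yao-graph built with the $\ell_1$ metric inside each cone, and \Cref{lemma:inside} guarantees that assembling these edges from the cell-local SRT computations does not change which shortest paths survive, so the global $H$ inherits the Yao-graph stretch.

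Next I would bound the stretch. By \Cref{lemma:mine}, the Euclidean stretch of this spanner is $1+\theta^2/8+o(\theta^2)$ with cone aperture $\theta=2\pi/k$. Substituting $k=\lfloor \pi/\sqrt{\epsilon}\rfloor$ and using $\pi/\sqrt{\epsilon}-1<k\le \pi/\sqrt{\epsilon}$ gives $2\sqrt{\epsilon}\le \theta<2\pi/(\pi/\sqrt{\epsilon}-1)=2\sqrt{\epsilon}+O(\epsilon)$, so $\theta^2/8=\epsilon/2+O(\epsilon^{3/2})$. The stretch is therefore $1+\epsilon/2+o(\epsilon)$, which is at most $1+\epsilon$. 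The definition $\epsilon=8/(s-4)$ is simply the standard well-separated-pair separation convention, since $1+8/(s-4)=(s+4)/(s-4)$, so a user who asks for separation $s$ receives precisely a $(1+\epsilon)$-spanner.

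The routine part is the algebra; the one step to watch is the factor-of-two margin. The algorithm deliberately chooses $\theta\approx 2\sqrt{\epsilon}$ (giving true stretch $\approx 1+\epsilon/2$) rather than the tight $\theta=2\sqrt{2\epsilon}$ that would make $\theta^2/8=\epsilon$, precisely so that the lower-order $o(\theta^2)$ term from \Cref{lemma:mine} together with the floor error in $k$ cannot push the stretch above the claimed $1+\epsilon$; this holds for all sufficiently small $\epsilon$ (equivalently all sufficiently large $s$). I expect the only genuine obstacle to be confirming that the cone-to-axis-aligned reduction keeps the apex of each cone at the corner of the corresponding $2$-sided range, so that \Cref{theorem:2sided} applies verbatim in each of the $k$ rotated instances and the per-cone nearest neighbors returned by \Cref{alg:nn} are indeed the Yao edges whose stretch \Cref{lemma:mine} analyzes.
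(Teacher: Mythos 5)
Your proof follows the same route as the paper's: invoke \Cref{lemma:inside} to reduce to the stretch of the cone-based construction, then apply \Cref{lemma:mine} with $k=\lfloor\pi/\sqrt{\epsilon}\rfloor$ to bound the stretch by $1+\theta^2/8+o(\theta^2)\le 1+\epsilon$. Your handling of the floor is in fact more careful than the paper's displayed chain (which asserts $2\pi/k\le 2\sqrt{\epsilon}$, even though flooring $k$ makes $2\pi/k\ge 2\sqrt{\epsilon}$); your estimate $\theta^2/8=\epsilon/2+O(\epsilon^{3/2})$ correctly identifies the factor-of-two margin that absorbs both the floor error and the $o(\theta^2)$ term, for sufficiently small $\epsilon$.
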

\begin{proof}
Using \Cref{lemma:inside}, the approximation factor of the shortest paths computed by the Yao-graphs constructed inside the cells depends only on the number of cones, which we set based on $\epsilon$. Using \Cref{lemma:mine},
\[
k= \lfloor \frac{\pi}{\sqrt{\epsilon}} \rfloor \Rightarrow \theta\leq\frac{2\pi}{k}\leq 2\sqrt{\epsilon} \Rightarrow 1+\frac{\theta^2}{8}+o(\theta^2)\leq 1+\epsilon.
\]
\end{proof}

\bibliographystyle{unsrt}
\bibliography{refs}

\begin{thebibliography}{10}

\bibitem{mpcj}
Paul Beame, Paraschos Koutris, and Dan Suciu.
\newblock Communication steps for parallel query processing.
\newblock {\em Journal of the ACM (JACM)}, 64(6):40, 2017.

\bibitem{mrc}
Howard Karloff, Siddharth Suri, and Sergei Vassilvitskii.
\newblock A model of computation for {MapReduce}.
\newblock In {\em Proceedings of the twenty-first annual ACM-SIAM symposium on
  Discrete Algorithms (SODA)}, pages 938--948. Society for Industrial and
  Applied Mathematics, 2010.

\bibitem{goodrich2}
Michael~T Goodrich, Nodari Sitchinava, Qin Zhang, and Aabogade IT-Parken.
\newblock Sorting, searching, and simulation in the {MapReduce} framework.
\newblock {\em arXiv preprint arXiv:1101.1902}, 2011.

\bibitem{frei2019efficient}
Fabian Frei and Koichi Wada.
\newblock Efficient circuit simulation in {MapReduce}.
\newblock In {\em 30th International Symposium on Algorithms and Computation
  (ISAAC 2019)}. Schloss Dagstuhl-Leibniz-Zentrum fuer Informatik, 2019.

\bibitem{callahan1995decomposition}
Paul~B Callahan and S~Rao Kosaraju.
\newblock A decomposition of multidimensional point sets with applications to
  k-nearest-neighbors and n-body potential fields.
\newblock {\em Journal of the ACM (JACM)}, 42(1):67--90, 1995.

\bibitem{mine}
Sepideh Aghamolaei, Fatemeh Baharifard, and Mohammad Ghodsi.
\newblock Geometric spanners in the {MapReduce} model.
\newblock In {\em International Computing and Combinatorics Conference}, pages
  675--687. Springer, 2018.

\bibitem{bentley1979decomposable}
Jon~Louis Bentley.
\newblock Decomposable searching problems.
\newblock {\em Information Processing Letters}, 8(5):244--251, 1979.

\bibitem{lopez2020parallel}
Andr{\'e}s L{\'o}pez~Mart{\'\i}nez.
\newblock {\em Parallel Minimum Cuts: An improved CREW PRAM algorithm}.
\newblock PhD thesis, 2020.

\bibitem{arge2001external}
Lars Arge.
\newblock External memory data structures.
\newblock In {\em Algorithms—ESA 2001: 9th Annual European Symposium
  {\AA}rhus, Denmark, August 28--31, 2001 Proceedings}, pages 1--29. Springer,
  2001.

\bibitem{jdsa}
Sepideh Aghamolaei, Vahideh Keikha, Mohammad Ghodsi, and Ali Mohades.
\newblock Sampling and sparsification for approximating the packedness of
  trajectories and detecting gatherings.
\newblock {\em International Journal of Data Science and Analytics}, pages
  1--16, 2022.

\bibitem{bose2004approximating}
Prosenjit Bose, Anil Maheshwari, Giri Narasimhan, Michiel Smid, and Norbert
  Zeh.
\newblock Approximating geometric bottleneck shortest paths.
\newblock {\em Computational Geometry}, 29(3):233--249, 2004.

\bibitem{spanner}
Luis Barba, Prosenjit Bose, Mirela Damian, Rolf Fagerberg, Joseph O'Rourke,
  Andr{\'{e}} van Renssen, Perouz Taslakian, and Sander Verdonschot.
\newblock New and improved spanning ratios for yao graphs.
\newblock {\em CoRR}, abs/1307.5829, 2013.

\end{thebibliography}

\end{document}